\documentclass[lettersize,journal]{IEEEtran}

\usepackage{soul}
\usepackage{amsmath,amsfonts}
\usepackage{algorithmic}
\usepackage{array}
\usepackage{textcomp}
\usepackage{stfloats}
\usepackage{url}
\usepackage{verbatim}
\usepackage{graphicx}
\hyphenation{op-tical net-works semi-conduc-tor IEEE-Xplore}
\def\BibTeX{{\rm B\kern-.05em{\sc i\kern-.025em b}\kern-.08em
		T\kern-.1667em\lower.7ex\hbox{E}\kern-.125emX}} 
\usepackage{balance} 

\usepackage{xargs}                  
 \usepackage{latexsym}
\usepackage[table,xcdraw]{xcolor}
\usepackage{tabularx}
\usepackage{amssymb, amsthm}
\usepackage{placeins}
\usepackage[font=footnotesize,labelfont=bf]{caption}
\captionsetup{singlelinecheck=false}%
\usepackage{subcaption}
\usepackage{mathtools}
\usepackage{multirow}
\usepackage{cite, enumerate}
\usepackage[inline,shortlabels]{enumitem}
\usepackage{dsfont}
\usepackage{siunitx} 
\usepackage{stackengine,scalerel}
\usepackage{mathrsfs, eucal}
\usepackage{bm}
\usepackage{accents}
\usepackage[normalem]{ulem}
\usepackage{tikz}
\usepackage{circuitikz}
\usepackage{tikz-3dplot}
\usetikzlibrary{shapes, shapes.geometric, arrows, positioning}
\usepackage{tkz-euclide}
\tikzset{
    bs/.pic = {                                      
        \draw[line width = 1pt,-round cap] (0,0.4\R) -- (-0.2\R,-0.2\R);
        \draw[line width = 1pt,-round cap] (0,0.4\R) -- (0.2\R,-0.2\R);
        \draw[line width = 1pt] (-0.2\R,-0.2\R) -- (0.133\R,0);
        \draw[line width = 1pt] (0.2\R,-0.2\R) -- (-0.133\R,0);
        \draw[line width = 1pt,-round cap] (-0.133\R,0) -- (0.067\R,0.2\R);
        \draw[line width = 1pt,-round cap] (0.133\R,0) -- (-0.067\R,0.2\R);
        \draw[line width = 1pt,-round cap] (-0.213\R,0.4\R) -- (0.213\R,0.4\R);
        \draw[line width = 1pt,-round cap] \foreach \x in {-0.21, -0.14,...,0.22} {(\x\R,0.4\R) -- (\x\R,0.47\R)};
        \node (dim) at (0,0)  [align=center,minimum width=0.5\R,minimum height=1\R] {};
    },
}
\tikzset{
    user/.pic = {     
        \draw[rounded corners=0.02\R] (-0.09\R,-0.17\R) rectangle (0.09\R,0.17\R) ;                     
        \draw[fill=gray] (-0.08\R,-0.12\R) rectangle (0.08\R,0.12\R);                          
        \draw[rounded corners=0.005\R] (-0.04\R,0.14\R) rectangle (0.04\R,0.15\R);                     
        \draw (0,-0.14\R) circle (0.012\R) ; 
        \node (dim) at (0,0)  [align=center,minimum width=0.2\R,minimum height=0.3\R] {};
    }
}

\tikzset{
  drone/.pic = {
    \begin{scope}[scale=0.3]
      \fill[black] (-1.2,0) to[out=5,in=175] (1.2,0)
                     -- (0.7,-0.3) -- (-0.7,-0.3) -- cycle;
      \foreach \x in {-1.2,1.2} {
        \draw[very thick, black, line cap=round] (\x,0.1) -- (\x,0.4); 
        \draw[very thick, black, line cap=round] (\x-0.5,0.4) -- (\x+0.5,0.4); 
        \fill[black] (\x,0.4) circle (0.06); 
      }
      \fill[black] (-0.5,-0.3) -- (-0.9,-1) -- (-0.7,-1) -- (-0.3,-0.3) -- cycle;
      \fill[black] (0.5,-0.3) -- (0.9,-1) -- (0.7,-1) -- (0.3,-0.3) -- cycle;
      \fill[black] (-0.25,-0.75) rectangle (0.25,-1.25);
      \fill[red] (0,-1.0) circle (0.08);
    \end{scope}
  }
}

\tikzset{
    target/.pic = {
    \begin{scope}[scale=0.6]
        \tikzset{every path/.style={line width=1.3pt}}
        \draw
            (-1,0) -- (-0.75,0)
            .. controls (-0.65,0) and (-0.65,-0.1) .. (-0.6,-0.1)
            arc[start angle=270, end angle=450, radius=0.2cm]
            .. controls (-0.55,-0.1) and (-0.55,0) .. (-0.45,0)
            -- (0.45,0)
            .. controls (0.55,0) and (0.55,-0.1) .. (0.6,-0.1)
            arc[start angle=270, end angle=450, radius=0.2cm]
            .. controls (0.65,-0.1) and (0.65,0) .. (0.75,0)
            -- (1,0) -- (1,0.5) -- (-1,0.5) -- cycle;
        \draw[line width=1pt] (-0.6,0.5) -- (-0.4,0.9) -- (0.4,0.9) -- (0.6,0.5);
        \draw[line width=1pt] (-0.3,0.55) rectangle (0.0,0.85);
        \draw[line width=1pt] (0.1,0.55) rectangle (0.4,0.85);
        \fill (-0.6,-0.1) circle (0.02);
        \fill (0.6,-0.1) circle (0.02);
        \end{scope}
    }
}

\tikzset{
    repeater/.pic={
        \begin{scope}[scale=.8]
            \tikzset{every path/.style={line width=1.3pt}}
            \draw (-.5,0) rectangle (.5,.5);
            \draw (-.5,0) -- (-.5,.8);   
            \draw (.5,.5) -- (.5,.8);     

            \draw (-.7,1) -- (-.5,.8) -- (-0.3,1) -- cycle;
	     \draw (.3,1) -- (.5,.8) -- (0.7,1) -- cycle;
  	     \draw[->,thick, blue] (-.5,1) .. controls (-.3,2) and (.3,2) .. (.5,1) node [xshift= -4mm, yshift=3mm]{$\nu$};
        \end{scope}
    }
}

\DeclareMathAlphabet\mathbfcal{OMS}{cmsy}{b}{n}

\usepackage{nicematrix}
\usepackage{arydshln}

\usepackage{acronym}
\usepackage[shortcuts]{glossaries}
\newacronym{dl}{DL}{downlink}
\newacronym{mimo}{MIMO}{multiple-input multiple-output}
\newacronym{csi}{CSI}{channel state information}
\newacronym{bs}{BS}{base station}
\newacronym{ap}{AP}{access-point}
\newacronym{tdd}{TDD}{time-division duplex}
\newacronym{awgn}{AWGN}{additive white Gaussian noise}
\newacronym{ml}{ML}{maximum-likelihood}
\newacronym{psd}{PSD}{positive semi-definite}
\newacronym{iid}{i.i.d.}{independent and identically distributed}
\newacronym{rcs}{RCS}{radar cross-section}
\newacronym{evd}{EVD}{Eigen value decomposition}
\newacronym{svd}{SVD}{singular value decomposition}
\newacronym{ls}{LS}{least-square}
\newacronym{ils}{ILS}{iterative least-square}
\newacronym{cdf}{CDF}{cumulative distribution function}
\newacronym{uav}{UAV}{unmanned aerial vehicle}
\newacronym{wrt}{w.r.t}{with respect to}
\newacronym{aod}{AoD}{angle of departure}
\newacronym{aoa}{AoA}{angle of arrival}
\newacronym{music}{MUSIC}{multiple signal classification}
\newacronym{los}{LoS}{line-of-sight}
\newacronym{mle}{MLE}{maximum-likelihood estimator}
\newacronym{pdf}{pdf}{probability density function}
\newacronym{isac}{ISAC}{integrated sensing and communication}
\newacronym{sinr}{SINR}{signal-to-interference-plus-noise ratio}
\newacronym{glrt}{GLRT}{generalized likelihood ratio test}
\newacronym{zf}{ZF}{zero-forcing}
\newacronym{ris}{RIS}{reconfigurable intelligent surface}
\newacronym{ncr}{NCR}{network-controlled repeater}
\newacronym{se}{SE}{spectral-efficiency}
\newacronym{gpp}{3GPP}{3rd Generation Partnership Project}
\newacronym{pod}{PoD}{probability of detection}
\newacronym{pfa}{PoFA}{probability of false alarm}
\newacronym{pomd}{PoMD}{probability of missed detection}
\newacronym{nlos}{NLoS}{non line-of-sight}
\newacronym{ofdm}{OFDM}{orthogonal frequency division multiplexing}
\usepackage[hidelinks, breaklinks, allcolors=blue]{hyperref}
\hypersetup{
	colorlinks   = false, 
	urlcolor     = cyan, 
	linkcolor    = magenta, 
	citecolor   = magenta, 
}
\usepackage[capitalise,nameinlink]{cleveref}  
\usepackage{prettyref}
\crefformat{equation}{(#2#1#3)} 
\crefname{figure}{Fig.}{Figs.}
\crefname{section}{Sec.}{Secs.}
\crefname{algocf}{Algo.}{Algos.}
\Crefname{algocf}{Algorithm}{Algorithms}
\usepackage{float}
\usepackage[linesnumbered, ruled, vlined]{algorithm2e}
\SetCommentSty{mycommfont}

\SetCommentSty{xCommentSty}
\LinesNumbered
\SetSideCommentRight
\DontPrintSemicolon
\RestyleAlgo{algoruled}

\SetCommentSty{mycommfont}
\SetKwInput{KwInput}{Input}                
\SetKwInput{KwOutput}{Output}            
\SetKwInput{KwInt}{Initialization}            


\newenvironment{myfigure*}
{\begin{figure*}[!t]\centering}
	{\hrule\end{figure*}}

\newcommand{\mt}[1]{\mathsf{#1}}
\newcommand{\mtt}[1]{{\mathtt{ #1}}}
\newcommand{\mr}[1]{{\mathrm{ #1}}}

\newcommand{\Elr}[1]{{ \mathbb{E} \left[#1  \right] }}

\newcommand{\bpr}[1]{{\left( #1 \right)}}

\newcommand{\bsr}[1]{{\left[ #1 \right]}}
\newcommand{\bbr}[1]{{\left\{ #1 \right\}}}

\newcommand{\vt}[1]{{\mtt{Vec} \left( #1 \right)}}

\newcommand{\snorm}[1]{{ \left\Vert #1 \right\Vert_{2}^2 }}

\newcommand{\vect}[1]{{\mathbf{ #1}}}

\newcommand{\ocirc}[1]{\ThisStyle{\ensurestackMath{%
  \stackon[.3pt]{\SavedStyle#1}{\SavedStyle\kern.05\LMpt\circ}}}}

\newcommand{\abs}[1]{{\big\lvert {#1}\big\rvert}}

\newcommand{\cn}[1]{{\sim\mathcal{CN}\left({#1}\right)}}
\newcommand{\cno}[1]{{\mathcal{CN}\left({#1}\right)}}
\newcommand{\mbbC}[1]{{\in\mathbb{C}^{{#1}}}}
\newcommand{\argmax}[1]{{\underset{{#1}}{\mr{arg\,max}}}}

\newcommand{\maximize}[1]{{\underset{\bbr{#1}}{\mr{max}}}}
\newcommand{\minimize}[1]{{\underset{\bbr{#1}}{\mr{min}}}}
\newcommand{\p}[1]{p\left({#1}\right)}

\newcommand\gldec[2]{
	\underset{#1}{\overset{#2}{\gtrless}}
}


\def\CalH{\mathcal{H}}

\def\CalL{\mathcal{L}}

\def\CalT{\mathcal{T}}



\def\vB{\vect{B}}
\def\vc{\vect{c}}
\def\vC{\vect{C}}

\def\vf{\vect{f}}

\def\vG{\vect{G}}

\def\vI{\vect{I}}

\def\vp{\vect{p}}

\def\vQ{\vect{Q}}
\def\vr{\vect{r}}
\def\vR{\vect{R}}

\def\vU{\vect{U}}

\def\vw{\vect{w}}

\def\vx{\vect{x}}

\def\vy{\vect{y}}

\def\vz{\vect{z}}
\def\bba{\pmb{a}}

\def\bbb{\pmb{b}}






\def\bbSigma{\pmb{\Sigma}}








\def\sfc{{\mt{c}}}

\def\sfg{{\mt{g}}}

\def\sfr{{\mt{r}}}
\def\sfs{{\mt{s}}}
\def\sft{{\mt{t}}}

\def\mttB{\mtt{B}}
\def\mttC{\mtt{C}}

\def\mttI{\mtt{I}}

\def\mttL{\mtt{L}}

\def\mttO{\mtt{O}}

\def\mttR{\mtt{R}}

\def\mttT{\mtt{T}}

\def\Nr{N_{\mr{r}}}
\def\Nt{N_{\mr{t}}}

\def\NvarB{\sigma^2_{\mathtt{BS}}}
\def\NvarR{\sigma^2_{\mathtt{R}}}
\def\NvarUE{\sigma^2_{\mathtt{UE}}}
\def\vZ{\mathbf{0}}


\def\ie{\text{i.e.}~}

\def\Nt{N_{\sft}}
\def\Nr{N_{\sfr}}

\def\att{\bba_{\sft}}
\def\atr{\bba_{\sfr}}
\def\btt{\bbb_{\sft}}
\def\btr{\bbb_{\sfr}}
\def\sinr{\mtt{SINR}}

\newtheorem{thm}{Theorem}
\newtheorem{lem}[thm]{Lemma}

\newtheorem{rem}{Remark}


\setcounter{totalnumber}{50}
\setcounter{topnumber}{50}
\setcounter{bottomnumber}{50}

\newcounter{appctr}
\renewcommand{\theappctr}{\Alph{appctr}}
\newcommand{\appsection}[1]{%
  \refstepcounter{appctr}%
  \section*{Appendix \theappctr: #1}%
  \addcontentsline{toc}{section}{Appendix \theappctr: #1}%
}

\newcommand{\edit}[1]{{\textcolor{black}{#1}}}
\graphicspath{{./Figures/}} 
\allowdisplaybreaks
\def\BibTeX{{\rm B\kern-.05em{\sc i\kern-.025em b}\kern-.08em
    T\kern-.1667em\lower.7ex\hbox{E}\kern-.125emX}}
\usepackage{balance}

\let\mybibitem\bibitem
\renewcommand{\bibitem}[1]
{\ifstrequal{#1}{Wang_TWC}{\color{black}\mybibitem{#1}}
{\ifstrequal{#1}{Xiao_TIT}{\color{black}\mybibitem{#1}}
{\ifstrequal{#1}{Bazzi_Chafii}{\color{black}\mybibitem{#1}}
{\color{black}\mybibitem{#1}}
}}}

\begin{document}
\title{On the Performance of Dual-Antenna Repeater Assisted Bi-Static MIMO ISAC}
\author{Anubhab Chowdhury and Erik G. Larsson,~\textit{Fellow,~IEEE}
\thanks{The authors are with the Department of Electrical Engineering (ISY), Linköping University, 58183 Linköping, Sweden. Emails: {\{anuch87, erik.g.larsson\}{@liu.se}}}
\thanks{This work was supported in part by the Knut and Alice Wallenberg (KAW) foundation, Excellence
Center at Linköping-Lund in Information Technology (ELLIIT), and the Swedish Research Council (VR).}
}


\maketitle

\begin{abstract}
This paper presents a framework for target detection and \gls{dl} data transmission in a repeater-assisted bi-static integrated sensing and communication system. A repeater is an active scatterer that retransmits incoming signals with a complex gain almost instantaneously, thereby enhancing sensing performance by amplifying the echoes reflected by the targets. The same mechanism can also improve \gls{dl} communication by mitigating coverage holes. However, the repeater introduces noise and increases interference at the sensing receiver, while also amplifying the interference from target detection signals at the \gls{dl} users. The proposed framework accounts for these sensing-communication trade-offs and demonstrates the potential benefits achievable through a carefully designed precoder at the transmitting base station. In particular, our finding is that a higher value of probability of detection can be attained with considerably lower target radar-cross-section variance by deploying repeaters in the target hot-spot areas.
\end{abstract}

\begin{IEEEkeywords}
Repeater, Integrated sensing and communication, Massive MIMO, GLRT
\end{IEEEkeywords}

\section{Introduction}
\IEEEPARstart{R}{epeaters}~{(also known as full-duplex relays) are active scatterers,  capable of receiving and retransmitting signals instantly with amplification \cite{Jianan_TWC}.}
They can be deployed on a large scale to enhance coverage for \gls{mimo} wireless systems~\cite{Erik_WCL, Sara_Vieira_Erik_Mag, Jianan_TWC, Byoung_Magazine, Sergiy_WCL, Emil_repeater}.
Reference~\cite{Sara_Vieira_Erik_Mag} demonstrated that repeaters can procure a \gls{se}  similar to that of distributed \gls{mimo}, while obviating the front-haul signalling overhead. Furthermore,~\cite{RIS_Makki_IEEE_Acess} indicated that repeaters can outperform \gls{ris}-assisted systems, with the additional advantage of repeaters being band/frequency selective. The attractive implementation aspects of repeaters have led to their inclusion in the 5G NR standards since \gls{gpp} Release $18$~\cite{3gpp}.

On parallel avenue, \gls{isac} offers efficient spectral and hardware utilization, and is actively being pushed forward as part of \gls{gpp} Release $19$, a 5G-advanced/pre-6G initiative~\cite{ISAC_6G}. A detailed survey of \gls{isac} is beyond the scope of this paper; readers can refer to~\cite{ISAC_survey, Zhang_Survey_2022, Michail_TWC} and references therein, and specifically for bi-static \gls{isac},~\cite{Wang_TWC, Xiao_TIT, Bazzi_Chafii}. \edit{Specifically,~\cite{Bazzi_Chafii} optimized \gls{ris}-assisted bistatic sensing.}
The authors in~\cite{Erik_EUSIPCO} considered repeater-assisted \emph{mono-static} \gls{isac},\footnote{In a mono-static setting, the \gls{bs} is required to be full-duplex, which in turn incurs additional signal processing and hardware overhead for self-interference cancellation.} where the repeater was used to improve the coverage for \gls{dl} communication users.
{In contrast to~\cite{Erik_EUSIPCO}, we investigate the feasibility of \emph{bi-static}~\gls{isac} in the presence of a repeater (specifically,  a dual-antenna repeater~\cite[Fig. $2$]{Sara_Vieira_Erik_Mag}),  focusing on the effects of the repeater on the performance of both target detection and \gls{dl} communication.}

In bi-static \gls{isac},  one of the \glspl{bs} transmits precoded \gls{dl} data and a dedicated signal for target detection. The other \gls{bs} performs joint clutter-channel and \gls{rcs} estimation, and target detection, using the reflected echoes. 
In this setting, \emph{ideally}, a repeater can enhance \gls{isac} performance in two ways: (a) by amplifying the echo reflected from the target, which in turn can improve the detection performance; (b) by removing coverage-holes for the \gls{dl} users. 

However, the repeater also amplifies communication-to-sensing and sensing-to-communication interference. Specifically, the amplified sensing signal can incur increased interference for the communication users. Contrarily, for sensing, should the amplified \gls{dl} signals~(directly via the repeater) overwhelm the reflected amplified echo of the target, detection performance at the sensing receiver can severely degrade. {To this end, this paper investigates these tradeoffs and discusses the \emph{choices of precoders} at the transmit \gls{bs} that can mitigate this interference while retaining the \gls{isac} performance improvement due to the repeater. Moreover, the \emph{developed \gls{glrt} framework} is agnostic of the channel models and applicable to diverse propagation conditions among the repeater, BS, and users. }

\begin{figure}
	\centering
	\begin{subfigure}{\columnwidth}
		\centering		
		\begin{tikzpicture}[scale=0.75, transform shape]
		\newdimen\R
		\R=2cm
		\draw (-4,3) pic (BSt) {bs} node [xshift = 0mm, yshift=12mm] {$\vx\mbbC{\Nt}$};
		\draw (1.7,1) pic (UE) {user};
		\draw (2.4,2) pic (UE2) {user};		
		\draw (2,3.1) pic (UE3) {user} node [yshift=-5mm]{$n$th User};
		\draw (-2,1) pic (R) {repeater} node [yshift=-1.9mm]{Repeater};
		\draw (-4,1) pic (Trg) {target} node [yshift=-2.7mm]{Target};
		\node  [align=center,below=-4mm of BStdim.south] {Tx. \gls{bs}};
		
		\draw[-{Stealth[length=3.5mm,  open, round]},black, dashed, thick,shorten >= 0.3cm] (-4,3.8) to (-2.2,1.7) node [xshift=-5mm, yshift=13mm]{$\btt^{T}$};
		
		\draw[-{Stealth[length=3.5mm,  open, round]},red, thick,shorten >= 0.3cm] (-4,1.6) to (-2.2,1.7);
		
		\draw[-{Stealth[length=3.5mm,  open, round]},black, dashed, thick,shorten >= 0.3cm] (-4,3.8) to (2,3.1) node [xshift=-15mm, yshift=0mm]{$\vf_{n}$};
		
		\draw[-{Stealth[length=3.5mm,  open, round]},black, dashed, thick,shorten >= 0.3cm] (-1.4,1.9) to (2,3.1) node [xshift=-20mm, yshift=-4mm]{$h_{n}$};

\draw[-{Stealth[length=3.5mm,  open, round]},black, dashed, thick,shorten >= 0.3cm] (-1.4,1.9) to (2.4,2);

\draw[-{Stealth[length=3.5mm,  open, round]},black, dashed, thick,shorten >= 0.3cm] (-1.4,1.9) to (1.8,1.4);
		\end{tikzpicture}
	\caption{Bi-static massive \gls{mimo} \gls{isac}: \emph{communication view-point}. Multi-path reflections from the target~(not via the repeater) can be incorporated in $\vf_n$, which can constitute both \gls{los} and \gls{nlos} components.}\label{fig: system_model_comm}
	\end{subfigure}
		\begin{subfigure}{\columnwidth}
		\centering
		\begin{tikzpicture}[scale=0.75, transform shape]
		\newdimen\R
		\R=2cm
		\draw (-4,3) pic (BSt) {bs} node [xshift = 0mm, yshift=12mm] {$\vx\mbbC{\Nt}$};
		\draw (2,3) pic (BSr) {bs};
		\draw (-4.7,1) pic (UE) {user};
		\draw (-5.4,2) pic (UE2) {user};		
		\draw (-5,3.1) pic (UE3) {user} node [yshift=-5mm]{User};
		\draw (-1.9,1.8) pic (Trg) {target} node [yshift=-2.7mm]{Target};
		\draw (.5,1) pic (R) {repeater} node [yshift=-1.9mm]{Repeater};
		\node  [align=center,below=-4mm of BStdim.south] {Tx. \gls{bs}};
		\node  [align=center,below=-4mm of BSrdim.south] {Rx.  \gls{bs}};
		\draw[-{Stealth[length=3.5mm,  open, round]},black, dashed, thick,shorten >= 0.3cm] (-4,3.8) to (-1.5,2.2) node [xshift=-13mm, yshift=3mm]{$\att^{T}$};
		\draw[-{Stealth[length=3.5mm,  open, round]},black, dashed, thick,shorten >= 0.3cm] (-1.9,2.3) to (1.7,4) node [xshift=-17mm, yshift=-5mm]{$\atr\mbbC{\Nr}$};
		\draw[-{Stealth[length=3.5mm, green, open, round, fill=white]},green, thick,shorten >= 0.3cm] (-1.9,2.4) to (.4,1.9) node [xshift=-10mm, yshift=-1mm]{${\color{black}\sfg\in\mathbb{C}}$};
		\draw[-{Stealth[length=3.5mm,  open, round]},red, dashed, thick,shorten >= 0.3cm] (-4,3.8) to (.4,1.9) node [xshift=-24mm, yshift=13mm]{$\btt^{T}$};
		\draw[-{Stealth[length=3.5mm,  open, round]},blue, dashed, thick,shorten >= 0.3cm] (.9,1.9) to (1.7,4) node [xshift=-5mm, yshift=-8mm]{$\btr$};
		\draw[-{Stealth[length=3.5mm,  open, round]},red, dashed, thick,shorten >= 0.3cm] (-4,3.8) to (1.7,3.9) node [xshift=-27mm, yshift=1.4mm]{$\vG_{\mttB}$};
		\end{tikzpicture}
	\caption{Bi-static massive \gls{mimo} \gls{isac}: \emph{sensing view-point}. Clutters are not explicitly drawn to avoid congestion. {We neglect the multi-hop reflected channels for the sensing purpose because of the three/multi-fold cascaded path loss effect.}}\label{fig: system_model_sensing}
	\end{subfigure}
	\caption{System model: communication and sensing with a repeater, and its implications to both. Interference links are indicated by red arrows.}
	\vspace{-1.5\baselineskip}
\end{figure}

\section{System Model}
We consider a bi-static \gls{isac} system where the transmit \gls{bs}, with $\Nt$ antennas,  beamforms a precoded sensing signal towards a target hot-spot area while simultaneously serving $K$ \gls{dl} single-antenna users. The system also contains a dual-antenna repeater. The receiver \gls{bs}, with $\Nt$ antennas, performs target detection based on the reflected echoes from the target, also via the repeater. {The system model from the communication point of view is illustrated in~\cref{fig: system_model_comm} and the sensing point of view in~\cref{fig: system_model_sensing}.} 
The system entails several desired and interference channels, whose definitions are:
\begin{itemize}[noitemsep,topsep=0pt, align=left,leftmargin=*]
\item $\vf_{n}\mbbC{\Nt}$: \gls{dl} channel from the transmit \gls{bs} to the $n$th user.
\item $h_{n}\in\mathbb{C}$: channel from the transmit antenna of the repeater to the receive antenna of the $n$th user.
\item $\att\mbbC{\Nt}$: the channel from the transmitter to the target.
\item $\btt\mbbC{\Nt}$: the channel from the transmitter to the repeater.
\item $\atr\mbbC{\Nr}$: the channel from the target to the receive \gls{bs}.
\item $\btr\mbbC{\Nr}$: the channel from the repeater to the receive \gls{bs}.
\item $g\in\mathbb{C}$: the channel from the target to the repeater.
\item $\vG_{\mttB}\mbbC{\Nr\times \Nt}$: the channel from the transmit to receive \gls{bs}. Also, let $\hat{\vG}_{\mttB}$ and $\tilde{\vG}_{\mttB}$ be the estimate and estimation error of this direct link channel between the \glspl{bs}.
\item $\vC\mbbC{\Nr\times \Nt}$: the composite clutter channel. Specifically, the clutter channel consisting of the \gls{los} and \gls{nlos} paths caused by permanent obstacles can be measured beforehand and canceled~\cite{Behdad_Ozlem_TWC}. The unknown part corresponds to the \gls{nlos} paths caused by temporary obstacles, encased in $\vC$.
\end{itemize}
Apart from the clutter channel, we assume perfect \gls{csi} {at the receiver for target detection as a necessary first step to understand the overall system performance with an intuitive and tractable formulation}.\footnote{{The composite channel via the target is $\alpha\atr\att^{T}$. We assume only $\atr\att^{T}$ to be known, as the \gls{bs} beamforms to a specific sensing zone whose associated \gls{aoa} and \gls{aod} are pre-calibrated. However, $\alpha$ is unknown and will be estimated as a part of the \gls{glrt} later.}} %
Further, let $\nu\in\mathbb{C}$ be the repeater amplification factor. The target is characterized by the associated \gls{rcs}, denoted by $\alpha$, which follows $\alpha\cn{0,\sigma_{\mttT}^2}$, $\sigma_{\mttT}^2$ being the \gls{rcs} variance. The \gls{rcs} is unknown at the receiver and needs to be estimated. Finally, we adopt a model wherein target detection is performed in a given hotspot, and a larger area can be covered by sending sensing signals to such hotspot areas serially~\cite{Behdad_Ozlem_TWC}.

\section{Signalling Scheme}
Let $\vx\bsr{\tau}$ be the precoded \gls{dl} transmit signal, transmitted at $\tau$th slot, which can be written as:\footnote{All subsequent analysis is for an arbitrary subcarrier of the underlying \gls{ofdm} waveform~(or any monochromatic signal with a prefix to remove time-dispersion), without explicitly showing the subcarrier index, i.e., before IFFT and after FFT on the transmitter and the receiver sides; with cyclic-prefix length being larger than the maximum delay spread.}
\begin{align}
\vx[\tau]=&\sqrt{\rho}\bbr{\sum\limits_{n=1}^{K}\sqrt{\pi_{n}}\vp_{n}s_{n}[\tau]+\sqrt{\pi_{\mttT}}\vp_{\mttT}s_{\mttT}[\tau]},~\tau\in\bsr{\tau_{\mttL}},\label{eq: x}
\end{align}
where, $\rho$ is the total transmit power at the \gls{bs}, $\pi_{n}$ and $\pi_{\mttT}$ denote the fraction of the total power allocated to the $n$th \gls{dl} user and for sensing, respectively. Also, $\vp_{n}$ and $\vp_{\mttT}$ are the corresponding unit-normalized precoding vectors for the $n$th \gls{dl} user and the target. Similarly, $s_{n}\in\mathbb{C}$ and $s_{\mttT}\in\mathbb{C}$ are  unit-energy, statistically independent symbols. Also, $\Elr{\snorm{\vx[\tau]}}\leq \rho$, satisfying the total transmit power budget. Finally, $\tau_{\mttL}$ is the total duration over which sensing signals are transmitted, which can also be a coherence block. Given this, we can write the input-output receive~($y_{\mttR}^{\mttI}$) and transmit~($y_{\mttR}^{\mttO}$) signals at the repeater terminals as:
\begin{subequations}
\begin{align}
&y_{\mttR}^{\mttI}[\tau]=\alpha g \bbr{\att^{T}\vx[\tau]}+\btt^{T}\vx[\tau],\\
&y_{\mttR}^{\mttO}[\tau]=\nu\bpr{y_{\mttR}^{\mttI}[\tau]+w_{\mttR}[\tau]},
\end{align}
where $w_{\mttR}[\tau]\cn{0, \NvarR}$ is the \gls{awgn} at the repeater.
\end{subequations}
\subsubsection{Sensing with repeater}
The signal received at the \gls{bs} is
\begin{align*}
\vy_{\mathtt{BS}}[\tau]=\alpha\atr\att^{T}\vx[\tau]+\btr y_{\mttR}^{\mttO}[\tau]+\vG_{\mttB}\vx[\tau]+\vC\vx[\tau]+\vw_{\mttB}[\tau],
\end{align*}
where $\vw_{\mttB}[\tau]\cn{\vZ_{\Nr}, \NvarB\vI_{\Nr}}$ is the \gls{awgn} at the \gls{bs}. Knowledge of the transmitted signal $\vx[\tau]$ can be made available at the receiver \gls{bs} via back-haul links and thus be treated as pilot symbols. Then, the receive \gls{bs} can subtract $\hat{\vG}_{\mttB}\vx[\tau]$ from $\vy_{\mathtt{BS}}$ to reduce inter-\gls{bs} interference. After this, substituting for $y_{\mttR}^{\mttO}$, the desired signal for the target detection and associated interference terms become:
\begin{align}
\vy_{\mathtt{BS}}[\tau]=\vr[\tau]\alpha+\dot{\vC}\vx[\tau]+\dot{\vw}_{\mttB}[\tau],\label{eq: target_model}
\end{align}
where the equivalent sensing channel $\vR[\tau]$, equivalent clutter channel $\dot{\vC}$, and the overall sensing noise $\dot{\vw}_{\mttB}[\tau]$, are:
\begin{subequations}
\begin{align}
&\vr[\tau]=\bpr{\atr\att^{T}+\nu g \bbr{\btr\att^{T}}}\vx[\tau],\\
&\dot{\vC}={\vC+\nu\btr\btt^{T}},\\
&\dot{\vw}_{\mttB}[\tau]=\tilde{\vG}_{\mttB}\vx[\tau]+\bbr{\nu w_{\mttR}[\tau]}\btr+\vw_{\mttB}[\tau].
\end{align}
\end{subequations}
The direct (amplified) interference due to the repeater, i.e., $\bbr{\nu\btr\btt^{T}\vx[\tau]}$, is incorporated as a part of the clutter.  {However,  the composite channel can be pre-estimated (in \gls{los}, it can even be fully characterized by \glspl{aoa} and \glspl{aod}) to cancel this interference, similar to inter-\gls{bs} interference.\footnote{This can also be cancelled by {choosing} the precoder for target $\bpr{\vI_{\Nt}- \frac{1}{\snorm{\btt}}\btt^{*}\btt^{T}}\att,$ leading to the direct link interference $\nu\btr\btt^{T}\vp_{\mttT}=0$, where $\vp_{\mttT}$ is the precoder intended for the target. Although it is not necessary for the subsequent analyses.} The resulting $\dot{\vC}\approx\vC$  encases the effects of environmental scatterers.} Later, with~\eqref{eq: target_model}, we will develop the target detection framework. 
\subsubsection{\gls{dl} communication with repeater}
For \gls{dl} communication analysis, we drop the explicit dependence of index $\tau$. The signal received at the $n$th \gls{dl} user can be expressed as
\begin{align}
y_{\mathtt{UE}}=\vf_{n}^{T}\vx+h_{n}\bbr{\nu\bpr{\alpha g \bbr{\att^{T}\vx}+\btt^{T}\vx+w_{\mttR}}}+w_{\mathtt{UE}},\label{eq: received_UE1}
\end{align}
where the second term corresponds to the repeater transmission and $w_{\mathtt{UE}}\cn{0, \NvarUE}$ is  \gls{awgn} at the UE. Rearranging~\eqref{eq: received_UE1}, we get
\begin{align}
y_{\mathtt{UE}}=&\bpr{\vf_{n}^{T}+\nu h_{n}\btt^{T}}\vx+\dot{w}_{\mathtt{UE}},\label{eq: received_UE2}
\end{align}
with $\dot{w}_{\mathtt{UE}}=\bpr{\nu h_{n}w_{\mttR}+w_{\mathtt{UE}}}$.
\begin{rem}
{Note that the term $``\nu \alpha h_{n} g \bbr{\att^{T}\vx}"$ in~\eqref{eq: received_UE1} can be subsumed in the effective \gls{dl} channel of the users, i.e., $\bbr{\vf_{n}}$; and is therefore not explicitly shown in~\eqref{eq: received_UE2}. This is because the user is agnostic of the presence of the target, and any reflection from it can be treated as from any other environmental scatterers.}
\end{rem}


\section{\Gls{rcs} Estimation \& Target Detection}
Let $\overline{\vy}_{\mathtt{BS}}\triangleq\bsr{{\vy}_{\mathtt{BS}}^{T}[1], {\vy}_{\mathtt{BS}}^{T}[2], \ldots, {\vy}_{\mathtt{BS}}^{T}[\tau_{\mttL}]}^{T}\mbbC{\Nr\times \tau_{\mttL}}$ be the concatenated received signal vector at the \gls{bs} at the end of the observation window. With this, we can formulate the following binary hypothesis testing framework:
\begin{subequations}
\begin{align}
&\CalH_{0}:\quad \overline{\vy}_{\mathtt{BS}}= \overline{\vc}+\overline{\vw},
\\
&\CalH_{1}:\quad \overline{\vy}_{\mathtt{BS}}=\overline{\vr}+\overline{\vc}+\overline{\vw},
\end{align}
\end{subequations}
where $\overline{\vr}$, $\overline{\vc}$, and $\overline{\vw}$ are the concatenation of $\vr[\tau]\alpha$, $\dot{\vC}\vx[\tau]$, and $\dot{\vw}_{\mttB}[\tau]$, respectively. The null hypothesis $\CalH_{0}$ represents the case that there is no target in the sensing area, while the alternative hypothesis
$\CalH_{1}$ represents the existence of the target. The joint \gls{rcs} estimation, target-free channel~(\ie  clutter) estimation, and the hypothesis testing problem can be written as
$\bbr{\hat{\alpha}, \hat{\dot{\vc}}, \hat{\CalH}} = \argmax{\bbr{{\alpha}, {\dot{\vc}}, {\CalH}}}\quad \p{\alpha, \dot{\vc}, {\CalH} \;\vert\;\overline{\vy}_{\mathtt{BS}}},$%
where $\dot{\vc}\triangleq \vt{\dot{\vC}}$\footnote{``$\vt{.}$" stands for vectorization.} and $\p{\alpha, \dot{\vc}, {\CalH} \;\vert\;\overline{\vy}_{\mathtt{BS}}}$ is the joint \gls{pdf} given the received vector $\overline{\vy}_{\mathtt{BS}}$. Thus, the \gls{glrt} can be formulated as $\CalL \gldec{\CalH_{1}}{\CalH_{0}}\lambda_{\sfs}$, where 
\begin{align}
\CalL=\dfrac{\maximize{\alpha, \dot{\vc}}\quad \p{\overline{\vy}_{\mathtt{BS}} \;\vert\; \alpha, \dot{\vc}, \CalH_{1}} \p{\alpha \;\vert\; \CalH_{1}} \p{ \dot{\vc} \;\vert\; \CalH_{1}}}{\maximize{\dot{\vc}}\quad \p{\overline{\vy}_{\mathtt{BS}} \;\vert\; \dot{\vc}, \CalH_{0}} \p{ \dot{\vc} \;\vert\; \CalH_{0}}},\label{eq: L}
\end{align}
and $\lambda_{\sfs}$ is the threshold used by the detector, which is selected to achieve a desired false alarm probability. 
\begin{lem}\label{lem: GLRT_Repeater}
Let $\bbSigma_{\sfc}$ and $\bbSigma_{\sfs}[\tau]$ denote the covariance of the clutter channel and the overall sensing noise $\dot{\vw}_{\mttB}[\tau]$~(which are individually zero mean random vectors); and let $\CalT\triangleq \ln\CalL$ be the test statistics, which evaluates to
\begin{align}
\CalT=\mathbf{t}_{\CalH_{1}}^{H}\vQ_{\CalH_{1}}^{-1}\mathbf{t}_{\CalH_{1}}-\mathbf{t}_{\CalH_{0}}^{H}\vQ_{\CalH_{0}}^{-1}\mathbf{t}_{\CalH_{0}},
\end{align}
where $\mathbf{t}_{\CalH_{1}}$, $\mathbf{t}_{\CalH_{0}}$, $\vQ_{\CalH_{0}}$ are given by
\begin{subequations}
\begin{align}
&\mathbf{t}_{\CalH_{1}}=\begin{bmatrix}
\sum\limits_{\tau=1}^{\tau_{\mttL}}\vr^{H}[\tau]\bbSigma_{\sfs}^{-1}[\tau]\vy_{\mtt{BS}}[\tau]\\
\sum\limits_{\tau=1}^{\tau_{\mttL}}\vB^{H}[\tau]\bbSigma_{\sfs}^{-1}[\tau]\vy_{\mtt{BS}}[\tau]
\end{bmatrix},\label{eq: t1}\\
&\mathbf{t}_{\CalH_{0}}=\sum\limits_{\tau=1}^{\tau_{\mttL}}\vB^{H}[\tau]\bbSigma_{\sfs}^{-1}[\tau]\vy_{\mtt{BS}}[\tau], \label{eq: t0}\\
&\vQ_{\CalH_{0}}={\sum\limits_{\tau=1}^{\tau_{\mttL}}\vB^{H}[\tau]\bbSigma_{\sfs}^{-1}[\tau]\vB[\tau]+\bbSigma_{\sfc}^{-1}}.\label{eq: Q0}
\end{align}
\end{subequations}
$\vQ_{\CalH_{1}}$ is given in~\eqref{eq: Q} on the next page, and $\vB[\tau]\triangleq\bpr{\vx^{T}[\tau]\otimes \vI_{\Nr}}$.
The resulting test can be evaluated as
\begin{align}
\hat{\CalH}=\begin{cases}
\CalH_{1},\quad\text{if}\quad \CalT\geq \ln\lambda_{\sfs}^{'},\\
\CalH_{0},\quad\text{if}\quad \CalT<\ln\lambda_{\sfs}^{'}.
\end{cases}
\end{align}
with $\lambda_{\sfs}^{'}(=\pi\lambda_{\sfs}\sigma^2_{\mttT})$ being the detection threshold.
\begin{myfigure*}
\begin{align}
\vQ_{\CalH_{1}}=
\bsr{
\begin{array}{c@{\hspace{5pt}}:@{\hspace{5pt}}c}
{\sum\nolimits_{\tau=1}^{\tau_{\mttL}}\vr^{H}[\tau]\bbSigma_{\sfs}^{-1}[\tau]\vr[\tau]+\frac{1}{\sigma_{\mttT}^2}} & {\sum\nolimits_{\tau=1}^{\tau_{\mttL}}\vr^{H}[\tau]\bbSigma_{\sfs}^{-1}[\tau]\vB[\tau]} {\hspace{8pt}} \\ [3pt] \hdashline
\bpr{\sum\nolimits_{\tau=1}^{\tau_{\mttL}}\vr^{H}[\tau]\bbSigma_{\sfs}^{-1}[\tau]\vB[\tau]}^{H} &  {\hspace{2pt}}{\sum\nolimits_{\tau=1}^{\tau_{\mttL}}\vB^{H}[\tau]\bbSigma_{\sfs}^{-1}[\tau]\vB[\tau]+\bbSigma_{\sfc}^{-1}}\\ [2pt] 
\end{array}
}
\mbbC{(\Nt\Nr+1)\times (\Nt\Nr+1)}.\label{eq: Q}
\end{align} 
\end{myfigure*}
\end{lem}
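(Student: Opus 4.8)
The plan is to read the GLRT in~\eqref{eq: L} as a ratio of two \emph{maximized} Gaussian posteriors and to perform both inner maximizations in closed form, exploiting that each one is a regularized (Tikhonov-type) quadratic program whose minimizer and optimal value are explicit. First I would set up the likelihoods. Modeling $\dot{\vw}_{\mttB}[\tau]$ as a circularly-symmetric complex Gaussian vector with covariance $\bbSigma_{\sfs}[\tau]$, independent across $\tau$, and $\dot{\vc}$ as zero-mean Gaussian with covariance $\bbSigma_{\sfc}$ (independent of the noise), and using the Kronecker identity $\dot{\vC}\vx[\tau]=\bpr{\vx^{T}[\tau]\otimes\vI_{\Nr}}\vt{\dot{\vC}}=\vB[\tau]\dot{\vc}$, the per-slot model under $\CalH_{1}$ is $\vy_{\mtt{BS}}[\tau]\mid(\alpha,\dot{\vc})\cn{\vr[\tau]\alpha+\vB[\tau]\dot{\vc},\,\bbSigma_{\sfs}[\tau]}$. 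With the prior $\alpha\cn{0,\sigma_{\mttT}^{2}}$, the numerator of~\eqref{eq: L} equals, up to the multiplicative factor $(\pi\sigma_{\mttT}^{2})^{-1}$ and the Gaussian normalizers $\prod_{\tau}(\pi^{\Nr}\det\bbSigma_{\sfs}[\tau])^{-1}$ and $(\pi^{\Nt\Nr}\det\bbSigma_{\sfc})^{-1}$, the quantity $\exp\{-J_{1}(\alpha,\dot{\vc})\}$ with
\[
J_{1}(\alpha,\dot{\vc})=\sum_{\tau=1}^{\tau_{\mttL}}\bpr{\vy_{\mtt{BS}}[\tau]-\vr[\tau]\alpha-\vB[\tau]\dot{\vc}}^{H}\bbSigma_{\sfs}^{-1}[\tau]\bpr{\vy_{\mtt{BS}}[\tau]-\vr[\tau]\alpha-\vB[\tau]\dot{\vc}}+\frac{\absn{\alpha}^{2}}{\sigma_{\mttT}^{2}}+\dot{\vc}^{H}\bbSigma_{\sfc}^{-1}\dot{\vc},
\]
and the denominator equals $\exp\{-J_{0}(\dot{\vc})\}$ times $(\pi^{\Nr})^{-\tau_{\mttL}}\prod_{\tau}(\det\bbSigma_{\sfs}[\tau])^{-1}(\pi^{\Nt\Nr}\det\bbSigma_{\sfc})^{-1}$, where $J_{0}$ is $J_{1}$ with the $\alpha$-terms removed.

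Next I would solve the two quadratic minimizations. Stacking $\bmtheta=\bsr{\alpha;\dot{\vc}}\in\mathbb{C}^{\Nt\Nr+1}$ and $\vM[\tau]=\bsr{\vr[\tau]\ \ \vB[\tau]}$, one has $J_{1}(\bmtheta)=\sum_{\tau}\bpr{\vy_{\mtt{BS}}[\tau]-\vM[\tau]\bmtheta}^{H}\bbSigma_{\sfs}^{-1}[\tau]\bpr{\vy_{\mtt{BS}}[\tau]-\vM[\tau]\bmtheta}+\bmtheta^{H}\vD\bmtheta$ with $\vD$ block-diagonal with blocks $\sigma_{\mttT}^{-2}$ and $\bbSigma_{\sfc}^{-1}$. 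Setting the Wirtinger gradient $\partial J_{1}/\partial\bmtheta^{*}$ to zero (equivalently, completing the square in matrix form) gives the normal equations $\vQ_{\CalH_{1}}\hat{\bmtheta}=\mathbf{t}_{\CalH_{1}}$ with $\vQ_{\CalH_{1}}=\sum_{\tau}\vM^{H}[\tau]\bbSigma_{\sfs}^{-1}[\tau]\vM[\tau]+\vD$ and $\mathbf{t}_{\CalH_{1}}=\sum_{\tau}\vM^{H}[\tau]\bbSigma_{\sfs}^{-1}[\tau]\vy_{\mtt{BS}}[\tau]$; expanding the $2\times2$ block product $\vM^{H}\bbSigma_{\sfs}^{-1}\vM$ and adding $\vD$ reproduces exactly~\eqref{eq: Q} and~\eqref{eq: t1}, and the optimal value is $\min_{\bmtheta}J_{1}=\sum_{\tau}\vy_{\mtt{BS}}^{H}[\tau]\bbSigma_{\sfs}^{-1}[\tau]\vy_{\mtt{BS}}[\tau]-\mathbf{t}_{\CalH_{1}}^{H}\vQ_{\CalH_{1}}^{-1}\mathbf{t}_{\CalH_{1}}$. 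Repeating the identical argument with the $\vr[\tau]\alpha$ column deleted yields $\vQ_{\CalH_{0}}$, $\mathbf{t}_{\CalH_{0}}$ as in~\eqref{eq: Q0},~\eqref{eq: t0} and $\min_{\dot{\vc}}J_{0}=\sum_{\tau}\vy_{\mtt{BS}}^{H}[\tau]\bbSigma_{\sfs}^{-1}[\tau]\vy_{\mtt{BS}}[\tau]-\mathbf{t}_{\CalH_{0}}^{H}\vQ_{\CalH_{0}}^{-1}\mathbf{t}_{\CalH_{0}}$.

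Finally I would assemble the log-ratio. Taking logs in~\eqref{eq: L}, the $\prod_{\tau}$ noise-determinant factors and the $\det\bbSigma_{\sfc}$ factor occur identically in numerator and denominator and cancel, and the data-only term $\sum_{\tau}\vy_{\mtt{BS}}^{H}[\tau]\bbSigma_{\sfs}^{-1}[\tau]\vy_{\mtt{BS}}[\tau]$ appearing in both minima cancels as well, leaving $\ln\CalL=-\ln(\pi\sigma_{\mttT}^{2})+\mathbf{t}_{\CalH_{1}}^{H}\vQ_{\CalH_{1}}^{-1}\mathbf{t}_{\CalH_{1}}-\mathbf{t}_{\CalH_{0}}^{H}\vQ_{\CalH_{0}}^{-1}\mathbf{t}_{\CalH_{0}}$. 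Defining $\CalT$ as the data-dependent part (so $\CalT$ equals $\ln\CalL$ up to the additive constant $\ln(\pi\sigma_{\mttT}^{2})$), the test $\CalL\gldec{\CalH_{0}}{\CalH_{1}}\lambda_{\sfs}$ is equivalent to $\CalT\gldec{\CalH_{0}}{\CalH_{1}}\ln(\pi\sigma_{\mttT}^{2}\lambda_{\sfs})=\ln\lambda_{\sfs}^{'}$, which is the stated decision rule. I expect the principal difficulty to be bookkeeping rather than conceptual: justifying the Gaussian/independence modeling of $\dot{\vw}_{\mttB}[\tau]$ and $\dot{\vc}$ that makes both posteriors log-quadratic, and tracking precisely which normalization constants cancel versus which are absorbed into $\lambda_{\sfs}^{'}$; once this is done, the block-partitioned completion of the square that produces~\eqref{eq: Q} is routine linear algebra.
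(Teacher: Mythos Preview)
Your proposal is correct and follows essentially the same route as the paper's proof: both exploit the Kronecker identity $\dot{\vC}\vx[\tau]=\vB[\tau]\dot{\vc}$, stack the unknowns into a single vector ($\bmtheta$ in your notation, $\vz$ in the paper's), complete the square in the resulting regularized Gaussian quadratic to obtain the minimizer $\vQ_{\CalH_i}^{-1}\mathbf{t}_{\CalH_i}$ and optimal value, and then cancel the common data-only term and determinant factors so that only the quadratic-form difference and the constant $(\pi\sigma_{\mttT}^{2})^{-1}$ survive. Your observation that the stated identity $\CalT\triangleq\ln\CalL$ holds only up to the additive constant $-\ln(\pi\sigma_{\mttT}^{2})$, which is then absorbed into the threshold $\lambda_{\sfs}'$, is exactly how the paper handles it as well.
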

\begin{proof}
See Appendix~\ref{app: A}. 
\end{proof}

{The complexity of the GLRT is dominated by a matrix inversion of dimension $(\Nt\Nr+1)\times (\Nt\Nr+1)$, and hence is $\mathcal{O}\left((\Nt\Nr+1)^3\right)$.}
Next, only the parameter remains to be computed in the sensing noise covariance $\bbSigma_{\sfs}[\tau]=\Elr{\dot{\vw}_{\mttB}[\tau]\dot{\vw}_{\mttB}^{H}[\tau]}$, which is evaluated~(treating transmitted symbols as pilots) as $\bbSigma_{\sfs}[\tau]=\Elr{\tilde{\vG}_{\mttB}\vx[\tau]\vx^{H}[\tau]\tilde{\vG}_{\mttB}^{H}}+\bbr{\abs{\nu}^2 \NvarR}\btr\btr^{H}+\NvarB\vI_{\Nr}.$
Next, if we assume that elements of $\tilde{\vG}_{\mttB}$ are \gls{iid} $\cno{0,\zeta^2}$, $\zeta^2$ capturing the power of the residual inter-BS interference~\cite{AC_CRM_TCOM_2024}, then $\Elr{\tilde{\vG}_{\mttB}\vx[\tau]\vx^{H}[\tau]\tilde{\vG}_{\mttB}^{H}}$ evaluates to $\zeta^2\snorm{\vx[\tau]}\vI_{\Nr}$, which completes the calculation of $\bbSigma_{\sfs}[\tau]$.
\begin{rem}
{We note that a noisy repeater can severely degrade the sensing performance, which is captured by $\abs{\nu}^2 \NvarR$ in the sensing noise covariance. The effect of the amplified \gls{dl} signals~(directly via the repeater) is implicitly captured in the clutter covariance matrix in~\eqref{eq: Q0}.}
\end{rem}
\section{\gls{dl} \gls{se} Analysis }
In \gls{dl}, based on~\eqref{eq: x} and~\eqref{eq: received_UE2}, the signal received at the $n$th user, with $\dot{\vf}_{n}\triangleq {\vf_{n}+\nu h_{n} \btt}$, can be expressed as:
\begin{align}
y_{\mathtt{UE}, n}=&\sqrt{\rho}\sqrt{\pi_{n}}\dot{\vf}_{n}^{T}\vp_{n}s_{n}+\sqrt{\rho}\sum\nolimits_{n'=1, n'\neq n}^{K}\sqrt{\pi_{n'}}\dot{\vf}_{n}^{T}\vp_{n'}s_{n'}\notag\\&+\sqrt{\rho}\sqrt{\pi_{\mttT}}\dot{\vf}_{n}^{T}\vp_{\mttT}s_{\mttT}+\dot{w}_{\mathtt{UE}},\label{eq: received_UEn}
\end{align}
where the second and the third term correspond to multi-user interference and interference due to a dedicated signal for target detection. The instantaneous \gls{sinr} at the $n$th \gls{dl} user can then be written as $\sinr_{n}=\frac{\rho\pi_{n}\abs{\dot{\vf}_{n}^T\vp_{n}}^2}{\rho\sum\limits_{n'=1, n'\neq n}^{K}\pi_{n'}\abs{\dot{\vf}_{n}^{T}\vp_{n'}}^2+\rho\pi_{\mttT}\abs{\dot{\vf}_{n}^{T}\vp_{\mttT}}^2+\Elr{\abs{\dot{w}_{\mathtt{UE}}}^2}},$
where the  noise variance can be computed as $\Elr{\abs{\dot{w}_{\mathtt{UE}}}^2}=\bbr{\abs{\nu}^2 \abs{h_{n}}^2\NvarR+\NvarUE}.$
Next, for the choice of the precoders, the transmitter can either consider the composite channel $\bbr{\dot{\vf}_{n}}$ or be completely agnostic of the repeater and consider $\bbr{{\vf}_{n}}$. We consider regularized \gls{zf} precoding scheme for the communication user, in which case $\vp_{n}$ becomes $\vp_{n}=\epsilon_{n}\bpr{\sum\nolimits_{n'=1}^{K}\dot{\vf}_{n'}\dot{\vf}_{n'}^{H}+\zeta_{\mtt{ZF}}\vI_{\Nt}}^{-1}\dot{\vf}_{n},$ where $\epsilon_{n}$ is the normalizer to ensure $\snorm{\vp_{n}}=1$ and $\zeta_{\mtt{ZF}}>0$ is a regularizer. For $\vp_{\mttT}$ we can consider one of the following:
    \emph{Target centric}: In this case, the transmitter beamforms towards the target, i.e., $\vp_{\mttT}=\epsilon_{\mttT}\dot{\vf}_{\mttT}$, with $\dot{\vf}_{\mttT}=\att$ and $\epsilon_{\mttT}$ is an appropriate normalizer.
    \emph{Communication centric}: We null the (potentially destructive) interference from the sensing signal to the users by projecting $\vp_{\mttT}$ onto the nullspace of the subspace spanned
by the users' channel vectors. In this case, $\vp_{\mttT}=\epsilon_{\mttT}\bpr{\vI_{\Nt}-\vU\vU^{H}}\dot{\vf}_{\mttT}$, where $\vU$ is a unitary matrix formed with the orthonormal columns
that span the column space of $\bbr{\dot{\vf}_{1}, \dot{\vf}_{2},\ldots, \dot{\vf}_{K}}$.

\section{Numerical Results \& Discussions}
In this section, we quantify the effects of a repeater on the overall \gls{isac} performance. The transmit power at the transmit \gls{bs} is $1$ Watt. The path loss for the communication channels is modeled using the $3$GPP Urban Microcell model. Each of the \glspl{bs} {is} equipped with $8$ antennas. 
The height of the \gls{bs}, repeaters, and users are $25$, $10$, and $1.5$ meters, respectively, following the models adopted in~\cite{Jianan_TWC}.
The carrier frequency and bandwidth are set to $1.9$ GHz and $20$ MHz, respectively. At the receive \gls{bs}, the noise spectral density is $-174$ dBm/Hz, and the noise figure is $9$ dB~\cite{Michail_TWC}. {The zone area is $500$ square meters, with the \glspl{bs} being in $(0, 250)$ and $(450, 250)$; the sensing zone is $(225, 250)$. The repeater is located at $(125, 250)$. The user locations are randomly generated.}
The repeater has the same noise power as the \gls{bs}. The sensing-related parameters follow the Swerling-I model~\cite{Wei_ISAC_Channel_Model} and $\bbr{\att, \atr}$ are generated following~\cite{Behdad_Ozlem_TWC}. 
{For clutter, \gls{iid} Gaussian taps are assumed.} 
Other relevant parameters are also mentioned alongside the results. The inter-\gls{bs} channel and the residual interference are modeled according to~\cite{AC_CRM_TCOM_2024}.


\begin{figure*}
\centering
\begin{minipage}{0.32\linewidth}
    \centering
    \includegraphics[width=.93\linewidth]{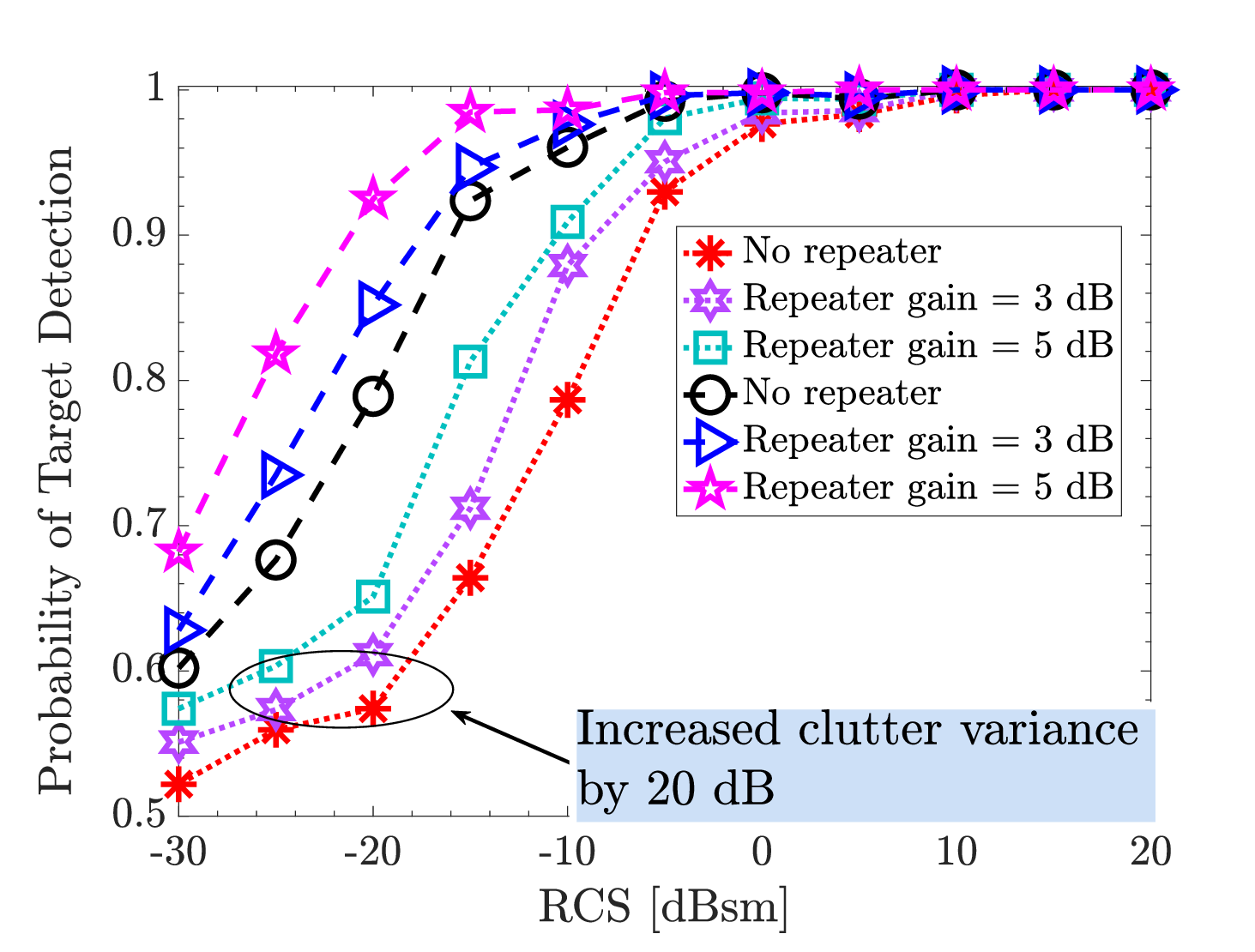}
    \caption{\Gls{pod} versus the target \gls{rcs}. Here, the repeater gains are measured with respect to the noise floor at the repeater. The threshold for \gls{glrt} is adjusted to maintain a \gls{pfa} of $0.01$.}\label{fig: 1}
 \end{minipage}
\hfill
\begin{minipage}{0.32\linewidth}
    \includegraphics[width =0.93\linewidth]{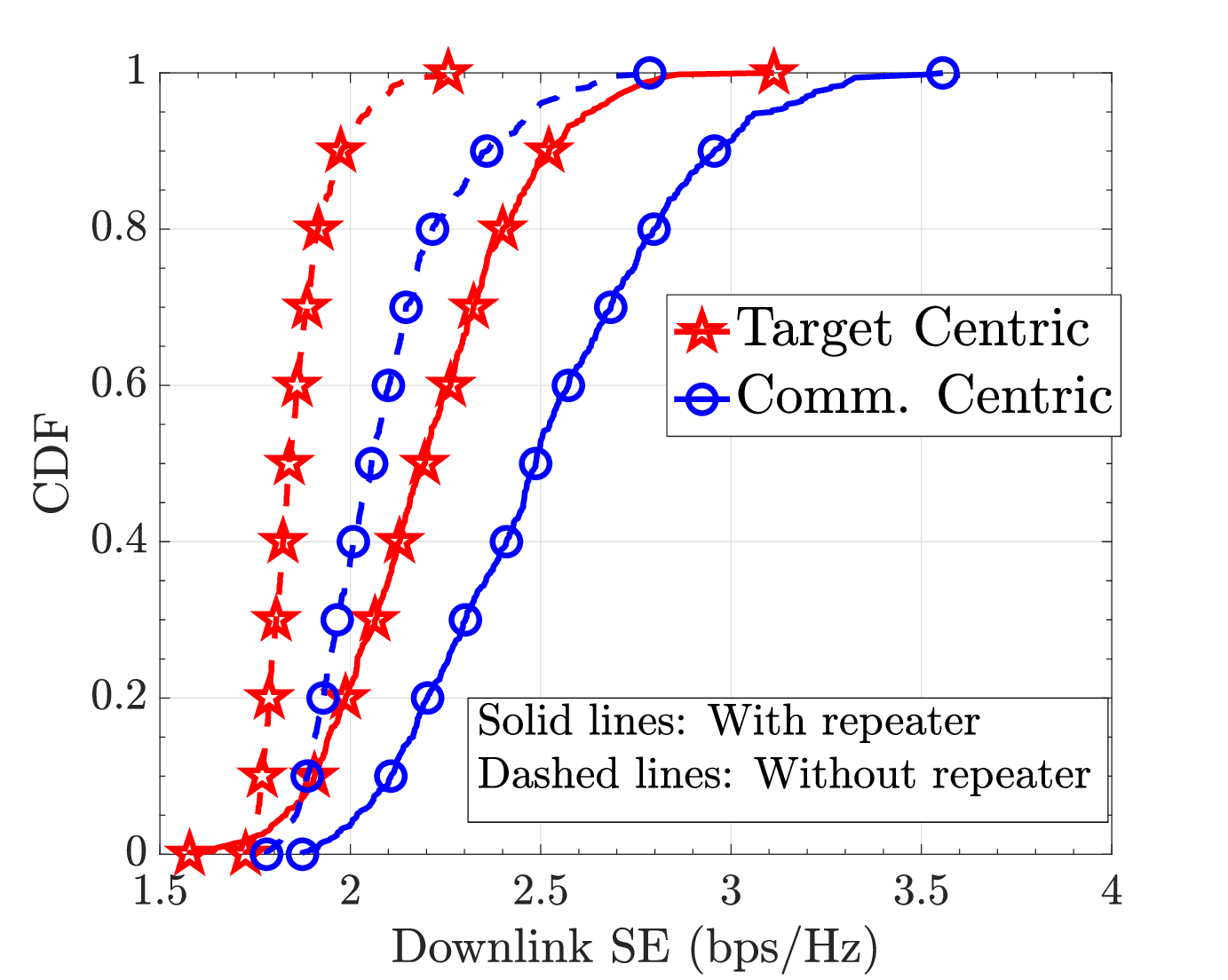}
\caption{\Gls{cdf} of \gls{dl} per-user \gls{se} with different choices of precoders. Equal power allocation has been employed by the transmit \gls{bs}.}\label{fig: 2}
\end{minipage}
\hfill
\begin{minipage}{0.32\linewidth}
    \centering
    \includegraphics[width=.9\linewidth]{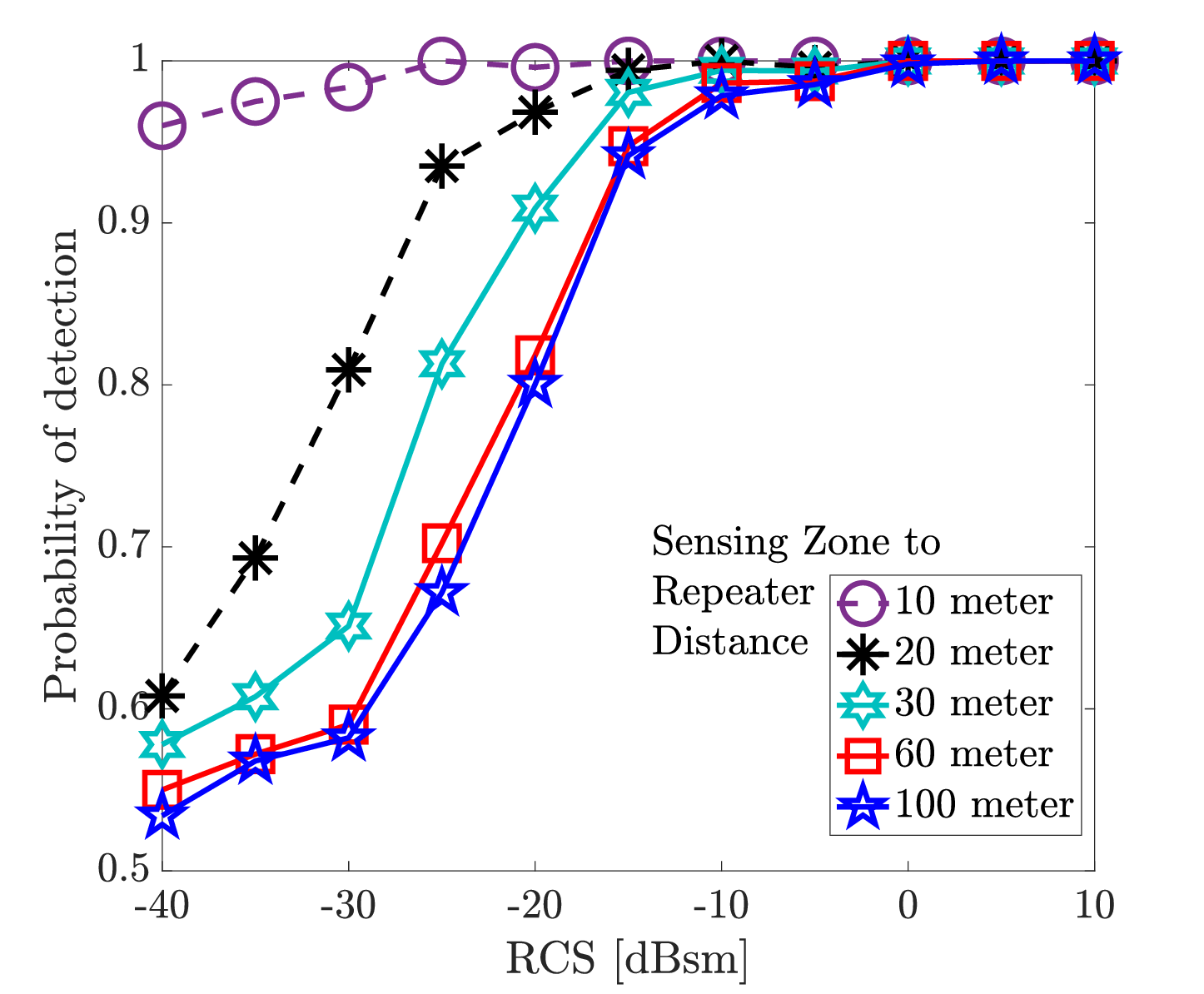}
    \caption{{Probability of detection with different repeater locations with respect to the target hotspot area. We assume the clutter variance to be $10$ dB below the noise floor and the \gls{pfa} is $0.01$.}}\label{fig: 4}
\end{minipage}
\end{figure*}

\Cref{fig: 1} illustrates the effects of a repeater on the target detection. For this, the repeater is placed within $100$ meters of the target hot-spot area, and we consider $K = 10$ communication users being served by the same \gls{bs}, whose locations are generated uniformly at random.  The slot duration for the \gls{glrt} consists of $50$ channel uses, \ie $\tau =1, 2, \ldots, 50$. 
For simplicity, we assume a Gaussian clutter model. Firstly, we observe that the repeater can substantially improve target detection performance as it amplifies and forwards the echo from the target and thereby improves the received echo strength at the \gls{bs}. Specifically, the repeater-aided system attains a high \gls{pod} at a lower value of \gls{rcs} compared to the traditional bi-static detection scenario. An increase in clutter variance degrades the overall detection performance; however, the repeater-assisted system remains robust to it in contrast with the no repeater case. 

In~\cref{fig: 2} illustrates the involved trade-offs between the different choices of precoders at the transmit \gls{bs}. {The repeater gain is set to  $5$ dB above the noise floor.} We can readily observe the degradation of the \gls{dl} \gls{se} for the target-centric precoder, as such a choice adds additional interference to the \gls{dl} users due to the dedicated target signals, for both with and without a repeater being present in the system. Communication-centric precoder nullifies that interference.
We also observe the benefits in the \gls{dl} \gls{se} due to the repeater. 

In~\Cref{fig: 4}, we illustrate the effects of moving the repeater towards the transmitter (varying the $x$-axis location) from the target hotspot area. This experiment reveals that a repeater near the hot-spot area can procure a high probability of detection value even for a lower target \gls{rcs}. The reflected echo from a repeater, which is also further from the target, suffers severely from pathloss and contributes little to the target detection. 

\edit{Finally,~\Cref{fig: trade-off} manifests the trade-off in the \gls{isac} performance, i.e., \gls{dl} \gls{se} versus \gls{pod}/\gls{pomd}, as the repeater location is varied from a zone where \gls{dl} users are located towards the target hot-spot area. Specifically, for this, we consider that $4$ downlink users are randomly located within $100$ meters of the transmit \gls{bs} and then the repeater location (x-axis) is moved from its original location~$(125, 250)$ towards the target hot-spot area. The RCS value is taken as $-30$ dBsm. The experiment demonstrates that a single repeater can aid communication or target detection based on its location, and thus motivates extending the current framework for a swarm-repeater-assisted system and investigating whether they can potentially offer uniformly good sensing and communication performance.}

\begin{figure}
\centering
\includegraphics[width = 0.7\linewidth]{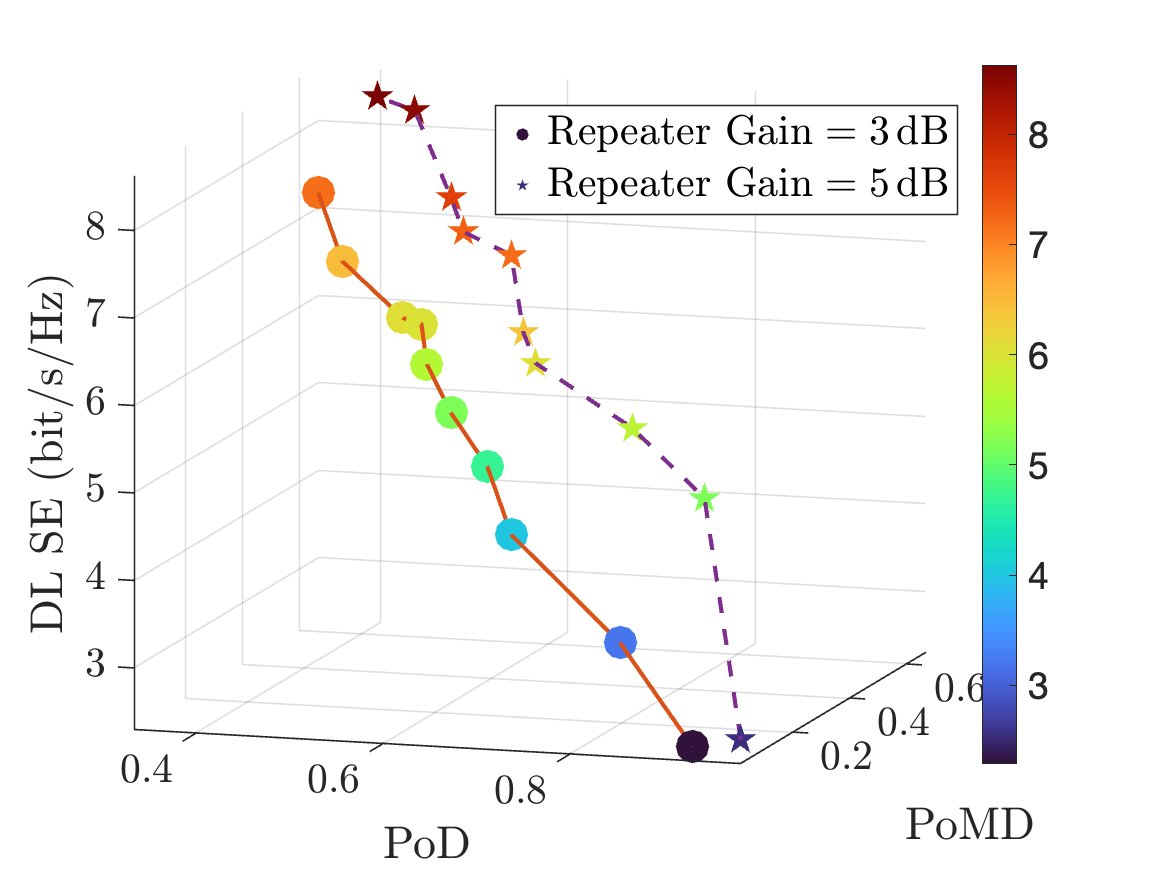}
\caption{\edit{\Gls{dl} \gls{se} versus target-detection (\Gls{pod} and \gls{pomd}) performance trade-off as the repeater location is moved from user locations to target hot-spot area.}}\label{fig: trade-off}
\end{figure}

\section{Remarks and Open Questions}
This paper presented initial results and a mathematical framework that facilitates repeater deployment for bi-static \gls{isac} and reported the benefits of using a repeater in the target hot-spot area.
One interesting future study is to incorporate channel estimation into the developed framework and to account for the time-varying nature of the channels. Furthermore, the impact of multiple repeaters~(\ie swarm-repeaters) in the system, including the effects of interaction between repeaters along with \gls{dl} power control, can be interesting to explore. 

\edit{Finally note that, \glspl{ris} can be a competing technology with repeaters. However, \gls{ris} and repeaters are governed by fundamentally different physics~(see~\cite[Table I]{Sara_Vieira_Erik_Mag} for a detailed comparison). Notably, the repeater is a band-selective device, whereas the \gls{ris} reflects all impinging signals regardless of their frequency. Therefore, repeater-assisted systems are not subject to out-of-band interference and are compatible with operational requirements in real-world multi-operator environments. Due to their ease of deployment, repeaters are already being actively considered for standardization in multi-operator systems. The vanilla repeater does not beamform, whereas an RIS can. It remains to be investigated how an $N$-element RIS  compares with a swarm of repeaters within a unified mathematical framework.} 
\appsection{Proof of Lemma~\ref{lem: GLRT_Repeater}}\label{app: A}
First, let us consider the optimization problem in the numerator of $\CalL$ in~\eqref{eq: L}. Recall, $\vB[\tau]\triangleq\bpr{\vx^{T}[\tau]\otimes \vI_{\Nr}}$, then $\dot{\vC}\vx[\tau]=\vB[\tau]\dot{\vc}$. Further, recall that $\p{\alpha \vert \CalH_{1}}$ and $\p{\dot{\vc} \vert \CalH_{1}}$ are respectively $\cno{0, \sigma_{\mttT}^2}$ and $\cno{\vZ_{\Nt\Nr}, \bbSigma_{\sfc}}$. Then, because of the temporal independence of the observation vectors, the numerator of $\CalL$ is equivalent to the following optimization 
\begin{align*}
&\maximize{\alpha, \dot{\vc}}\quad \bbr{\prod_{\tau=1}^{\tau_{\mttL}}\p{\vy_{\mathtt{BS}}[\tau] \;\vert\; \alpha, \dot{\vc}, \CalH_{1}}} \p{\alpha \;\vert\; \CalH_{1}} \p{ \dot{\vc} \;\vert\; \CalH_{1}}\\&=\maximize{\alpha, \dot{\vc}}\quad \bbr{\prod_{\tau=1}^{\tau_{\mttL}}\p{\dot{\vw}_{\mttB}[\tau]=\vy_{\mathtt{BS}}[\tau]-\vr[\tau]\alpha+\vB[\tau]\dot{\vc} \;\vert\; \alpha, \dot{\vc}}}\\&\hspace{12em}\times \p{\alpha \;\vert\; \CalH_{1}} \p{ \dot{\vc} \;\vert\; \CalH_{1}}\\&=\mttC_{\CalH_{1}} \exp \bbr{-\minimize{\alpha, \dot{\vc}} f(\alpha, \dot{\vc} ; \CalH_{1})},
\end{align*}
where $\mttC_{\CalH_{1}} $ is a constant and the function $f(\alpha, \dot{\vc} ; \CalH_{1})$, after some algebra, can be expressed as
\begin{align*}
\sum\nolimits_{\tau=1}^{\tau_{\mttL}}\vy_{\mathtt{BS}}^{H}[\tau]\bbSigma_{\sfs}^{-1}[\tau]\vy_{\mathtt{BS}}+\vz^{H}\vQ_{\CalH_{1}}\vz-2\Re\bbr{\vz^{H}\mathbf{t}_{\CalH_{1}}},
\end{align*}
where $\vz=\bsr{\alpha, \dot{\vc}^{T}}^T$ and the \gls{psd} matrix $\vQ_{\CalH_{1}}$ has the block-structure as shown in~\eqref{eq: Q}. Then, $\mathbf{t}_{\CalH_{1}}$ is expressed as shown in~\eqref{eq: t1}. Now, the estimate of $\vz$ obtained from $f(\alpha, \dot{\vc} ; \CalH_{1})$ equals to $\vQ_{\CalH_{1}}^{-1}\mathbf{t}_{\CalH_{1}}$, resulting 
\begin{align*}
-\minimize{\alpha, \dot{\vc}} f(\alpha, \dot{\vc} ; \CalH_{1})=-\sum\limits_{\tau=1}^{\tau_{\mttL}}\vy_{\mathtt{BS}}^{H}[\tau]\bbSigma_{\sfs}^{-1}[\tau]\vy_{\mathtt{BS}}+\mathbf{t}_{\CalH_{1}}^{H}\vQ_{\CalH_{1}}^{-1}\mathbf{t}_{\CalH_{1}}.
\end{align*}
Similarly, we can solve the optimization of the denominator of $\CalL$, and show that it evaluates to 
\begin{align*}
&\maximize{\dot{\vc}}\quad \p{\overline{\vy}_{\mathtt{BS}} \;\vert\; \dot{\vc}, \CalH_{0}} \p{ \dot{\vc} \;\vert\; \CalH_{0}}\notag\\&=\mttC_{\CalH_{0}} \exp\bbr{\mathbf{t}_{\CalH_{0}}^{H}\vQ_{\CalH_{0}}^{-1}\mathbf{t}_{\CalH_{0}}-\sum\nolimits_{\tau=1}^{\tau_{\mttL}}\vy_{\mathtt{BS}}^{H}[\tau]\bbSigma_{\sfs}^{-1}[\tau]\vy_{\mathtt{BS}}},
\end{align*}
where $\mttC_{\CalH_{0}} $ is a constant, $\mathbf{t}_{\CalH_{0}}$ and $\vQ_{\CalH_{0}}$ are respectively given by~\eqref{eq: t0} and~\eqref{eq: Q0}.
From the above, we get the final result. \qed

\ifCLASSOPTIONcaptionsoff
\newpage
\fi
\bibliographystyle{IEEEtran.bst}
\typeout{}
\bibliography{IEEEabrv.bib, Bibliography_list.bib}

\end{document}